\definecolor{lime}{HTML}{A6CE39}
\DeclareRobustCommand{\orcidicon}{
	\begin{tikzpicture}
	\draw[lime, fill=lime] (0,0) 
	circle [radius=0.16] 
	node[white] {{\fontfamily{qag}\selectfont \tiny ID}};
	\draw[white, fill=white] (-0.0625,0.095) 
	circle [radius=0.007];
	\end{tikzpicture}
	\hspace{-2mm}
}
\xdef\csname orcid\x\endcsname{\noexpand\href{https://orcid.org/\csname orcidauthor\x\endcsname}{\noexpand\orcidicon}}
\begin{document}

\title{On Hybrid Gene Regulatory Networks}
\author{Adrian Wurm\orcidA{}$^1$ \and Honglu Sun$^2$ }

\authorrunning{A. Wurm}

\institute{BTU Cottbus-Senftenberg, Lehrstuhl Theoretische Informatik,\\ Platz der Deutschen Einheit 1, 03046 Cottbus, Germany,
\url{https://www.b-tu.de/}
\email{wurm@b-tu.de} \and Nantes Université, École Centrale Nantes, CNRS, LS2N, UMR 6004, F-44000 Nantes, France}

\maketitle 
\begin{abstract}
    In this work, we study a class of hybrid dynamical systems called hybrid gene regulatory networks (HGRNs) which was proposed to model gene regulatory networks. 
    In HGRNs, there exist well-behaved trajectories that reach a fixed point or converge to a limit cycle, as well as chaotic trajectories that behave non-periodic or indeterministic. 
    In our work, we investigate these irregular behaviors of HGRNs and present theoretical results about the decidability of the reachability problem, the probability of indeterministic behavior of HGRNs, and chaos especially in 2-dimensional HGRNs.

\keywords{Hybrid system \and Decidability \and Chaos \and Bisimulation.}
\end{abstract} 

\section{Introduction}
Hybrid systems are dynamical systems that admit both continuous and discrete behavior. 
They were initially proposed to model the interactions between the discrete digital world and the continuous physical world \cite{alur1991hybrid}.
In recent years, hybrid systems were also used in system biology to study biological networks\cite{farcot2009periodic,chaves2013hierarchy,firippi2020topology}.

In this work, we study a class of hybrid systems called hybrid gene regulatory networks (HGRN) \cite{cornillon2016hybrid,behaegel2016hybrid}, which is an extension of Thomas' discrete modeling framework \cite{thomas1973boolean,thomas1991regulatory}. 
These hybrid systems were proposed to model gene regulatory networks, a framework to describe the regulation influences between genes.

Two widely used formalisms to study gene regulatory networks are discrete models (like Boolean networks \cite{kauffman1969metabolic}) and continuous models (differential equations \cite{barik2010model,almeida2020control} and stochastic models \cite{karlebach2008modelling}).
The major advantage of HGRNs compared to discrete models is that they can provide and process temporal information of the systems.
Compared to purely continuous models, the dynamical properties of HGRNs are easier to analyze, which makes them a more reasonable choice to apply on large and complex instances.

HGRNs are similar to piecewise-constant derivative systems (PCD systems) \cite{PCD} which are a special case of hybrid automata \cite{alur1993hybrid}. The major difference between HGRNs and PCD systems, as for example in \cite{PCD,asarin2012low,sandler2019deciding}, is the existence of sliding modes, which means that when a trajectory reaches a black wall (a boundary of the discrete state which can be reached but cannot be crossed by trajectories), the trajectory does not end there but is instead forced to move along this black wall.
There exist other methods to define the behavior of trajectories on a black wall \cite{gouze2002class,plahte2005analysis} which are different from the sliding modes in HGRNs.

One important research topic of dynamical models of gene regulatory networks is how to analyze the reachability of states, for instance, whether a trajectory starting in a certain state can reach another given state.
These questions arise in cell reprogramming \cite{mandon2019sequential}. In \cite{honglu}, the reachability problem of HGRNs is investigated and methods are proposed to analyze the reachability of regular trajectories, meaning those that reach a fixed point or converge to a limit cycle. 

In HGRNs however, there also exists so-called chaotic trajectories that cross an infinite sequence of discrete regions so that this sequence does not end up in a cycle.
For now, our understanding of this irregular behaviour is quite limited.
Further understanding of these trajectories can potentially lead to the development of new methods to analyze the reachability problem for them. 
In this work, we want to fill the gaps in this research field by investigating irregular behavior occuring in HGRNs. 
Our major contributions are:
\begin{itemize}
    \item We prove that the reachability problem for HGRNs is undecidable and that this enforces the existence of chaotic trajectories.
    \item We prove that under reasonable conditions, the probability for a random trajectory to admit indeterminism tends to 1 as the dimension of the system tends to infinity.
    \item We argue that  in two-dimensional HGRNs, chaos is not possible without indeterminism, and conclude that the probability for a random two-dimensional HGRN to admit chaos with indeterminism is 0. 
\end{itemize}

The paper is organized as follows:
In Section \ref{Section:preliminaries} we collect basic notions and recall the definition of HGRNs as well as their properties. In Section \ref{Section:structure} we prove that HGRNs are Turing-powerful and conclude that chaotic trajectories must exist. In Section \ref{Section:Indeterminism} we give some structural results on indeterminism and provide conditions as well as probability estimations for the bifurcation of trajectories.

The paper ends with some open questions. 

\section{Preliminaries and Network Reachability Problems}\label{Section:preliminaries}

We start by defining the problems we are interested in; here,
we follow the definitions and notions of \cite{honglu} for everything related to 
hybrid gene regulatory networks. 

Intuitively, a HGRN is a  an $N$-dimensional rectangular cuboid/hyperrectangle with integer edge lengths decomposed into hypercubes with unit edge length. Each such hypercube $d$ is called a discrete state and is associated with a celerity $c_d$ enforcing the direction in which a trajectory is allowed to move inside $d$, so a trajectory may only change its direction in the moment of transition between two discrete states.

\begin{definition}[Hybrid gene regulatory network (HGRN)]
Let $N\in\mathbb N$ and let $n_i\in\mathbb N$ for each $i\in\{1,...,N\}$. A \emph{Hybrid gene regulatory network} (HGRN) on $N$ genes with $n_i$ levels is a pair $\mathcal{H}=(E, c)$, where
\[E := \left\{ d \in \mathbb{N}^N \mid \forall i \in \left\{1,2,...,N \right\} : 0\leq d_i \leq n_i \right\}\]

is called the set of discrete states, and $c:E\rightarrow\mathbb{R}^{N}$ is called the celerity function.
\end{definition}

\begin{definition}[Hybrid state of HGRN]
A \emph{hybrid state} of a HGRN is a pair $h=(\pi,d)$ containing a \emph{fractional part} $\pi\in[0,1]^{N}$, and a discrete state $d\in E$.
$E_h$ is the set of all hybrid states and $E_{sh}$ is the set of all finite or infinite sequences of hybrid states: 
\[E_{sh} = \left\{ (h_0,h_1,...,h_m) \in (E_h)^{m+1} \mid m \in \mathbb{N}\cup \{\infty\} \right\}.\]
Note that $E_h=:E_h^1\subseteq E_{sh}$.
\end{definition}

\begin{definition}[Boundary]
A boundary of a discrete state $d$ is a set of hybrid states defined by $e(i,\pi_0,d) := \left\{ (\pi,d) \in E_h \mid \pi^{i} = \pi_0, \right\}$, where $ i \in \left\{ 1, 2,..., N\right\}$ and $\pi_0 \in \left\{0,1\right\}$.
\end{definition}

\begin{definition}[Hybrid trajectory]
 Let $t_0 \in \mathbb{R}^+ \cup \{\infty\}$ and $[0,t_{0}]$ a time interval. For a function $\tau:[0,t_{0}]\rightarrow E_{sh}$, we denote by abuse of notation by $\pi:[0,t_{0}]\rightarrow E_{sh}$ the projection of $\tau$ on the second component.

 For $\tau(t)=h$, we denote by  $\frac{d\tau(t^+)}{dt}:=\frac{d\pi(t^+)}{dt}$ the temporal derivative of the fractional part of $\tau$, $\frac{d\tau(t^-)}{dt}$ and $\frac{d\tau(t)}{dt}$ are defined analogous.
 
 We call $\tau$ a hybrid trajectory, if for every time $t\in[0,t_{0}]$ with $\tau(t)=h$ and $h=(h_0,...,h_m)\in E_h^{m+1}$ a sequence of $m+1$ hybrid states $h_j=(\pi_j,d_j)$, the following conditions are satisfied:

\begin{itemize}
\item For $t=0$ we have $m=0$, and $\tau(0)\in E_h$ is not on any boundary.
\item If $h_0$ does not belong to any boundary, then $m=0$ and $\frac{d\tau(t)}{dt} = c(d_0)$ for $t>0$ and $\frac{d\tau(t^+)}{dt} = c(d_0)$ for $t=0$ 
\item If $h_0$ only belongs to one boundary $e$, let $e$ be the upper boundary in the $i^{th}$ dimension (the following is analogous when $e$ is the lower boundary). In case $d_0^i$ is not the maximal discrete level of gene $i$, let the discrete state on the other side of $e$ be noted as $d_{r}$, where $d_{0}^{k} = d_{r}^{k}$ for all $k \neq i$, and $d_{0}^{i}+1 = d_{r}^{i}$.
There are four possible cases:
	\begin{itemize}

	\item If $c(d_{0})^i = 0$, then $m=0$ and the trajectory from the current state will slide along the boundary $e$ with speed $\frac{d\tau(t^{+})}{dt} = \frac{d\tau(t^{-})}{dt} = c_{0}$, which is then called a \emph{neutral boundary} of $d_{0}$. 
	\item If $c(d_{0})^i > 0$, and either $d_0^i$ is the maximal discrete level of the $i^{th}$ gene, or $d_0^i$ is not the maximal discrete level of the $i^{th}$ gene but the $i^{th}$ component of $c(d_{r})$ is negative, then $m=0$ and the trajectory from the current state will slide along the boundary $e$, which is called an \emph{attractive boundary} of $d_{0}$. We then have that $\frac{d\tau(t)}{dt}^k = c(d_{0})^k$ for all $k \neq i$, $\frac{d\tau(t^{+})}{dt}^i = 0$ and $\frac{d\tau(t^{-})}{dt}^i \in \{0,c(d_{0})^i\}$.
	\item If $c(d_0)^i > 0$, $d_0^i$ is not the maximal discrete level of the $i^{th}$ gene, and the $i^{th}$ component of $c_{r}$ is positive, then the trajectory from the current state instantly crosses  the boundary $e$ and enter the discrete state $d_{r}$. We then have $m=1$ and $\tau(t)=((\pi_0,d_0),(\pi_1,d_1))$, where $d_1=d_r$, $\pi_0^j=\pi_1^{j}$ for all if $j \neq i$, and $\pi_1^{i} = 0$. Such a boundary is called an \emph{output boundary} and the counterpart of $d_{0}$ is called  an \emph{input boundary} of $d_{r}$. We then demand that $\frac{d\tau(t^{+})}{dt} = c(d_{r})$ and  $\frac{d\tau(t^{-})}{dt} = c(d_{0})$.
	\end{itemize}
\item If $h$ belongs to several boundaries, then the previous cases can be mixed:
	\begin{itemize}
	\item If in these boundaries there is no output boundary, then the trajectory from the current state will exit all input boundaries and slide along all attractive or neutral boundaries.
	\item If in these boundaries there is only one output boundary, then the trajectory from the current state will cross this output boundary.
	\item If in these boundaries there are several output boundaries, then the trajectory from the current state can cross any of them, but can only cross one boundary at one time, which causes indeterministic behavior.
	\end{itemize}
\end{itemize}

A hybrid trajectory $\tau$ is called \begin{itemize}
    \item \emph{finite}, if $t_0\neq\infty$ and the sequence of states that are reached is finite,
	\item \emph{periodic}, if it is defined on $t_0=\infty$ and $\exists T, t_1>0, \forall t \in [t_1, \infty[, \tau(t) = \tau(t+T)$, 
    \item \emph{chaotic}, if the sequence of discrete states that it passes is infinite and non-periodic, meaning it does not end up in a cycle and
    \item \emph{regular}, if it is not chaotic. 
\end{itemize}

\end{definition}

\begin{definition}[Reachability Problem]
The reachability problem for HGRNs is the following: Given a HGRN and two sets $S,T$ of hybrid states in form of linear inequalities, the question is whether there exists a trajectory $\tau$ starting in a hybrid state $s\in S$ and reaching at some time $t$ a state $f\in F$, meaning 
\[\exists \tau \exists s\in S,f\in F, t>0 : \tau(0)=s \land \tau(t)=f.\]
If such a trajectory exists, we say that $F$ is reachable from $T$.
\end{definition}
Note the special cases that $S,T$ might contain only one hybrid state or that they may be a whole discrete state. It was shown in \cite{honglu} that the reachability problem is decidable for regular trajectories.

\begin{definition}[Area of attraction]
Consider a subset $A$ of hybrid states of a HGRN. The area of attraction $\varphi(A)$ is the set of all hybrid states with the property that every trajectory starting in them will eventually intersect $A$. The extended area of attraction $\Phi(A)$ is the set of all hybrid states with the property that there exists a trajectory starting in them that will at some point intersect $A$. 
\end{definition}

\begin{example}
 Fig. \ref{fig:binary} is a sketch of a HGRN on three genes with levels $n_1=2,n_2=4,n_3=2$, the arrows indicate the celerities where an encircled circle indicates an upward movement and an encircled X a downward movement, missing arrows indicate that the celerity is the zero-vector. The left side of the figure illustrates level 1 of gene 3, the right side level 2.
 \newpage
\begin{figure}[h]
\centering
\includegraphics[width=0.75\textwidth]{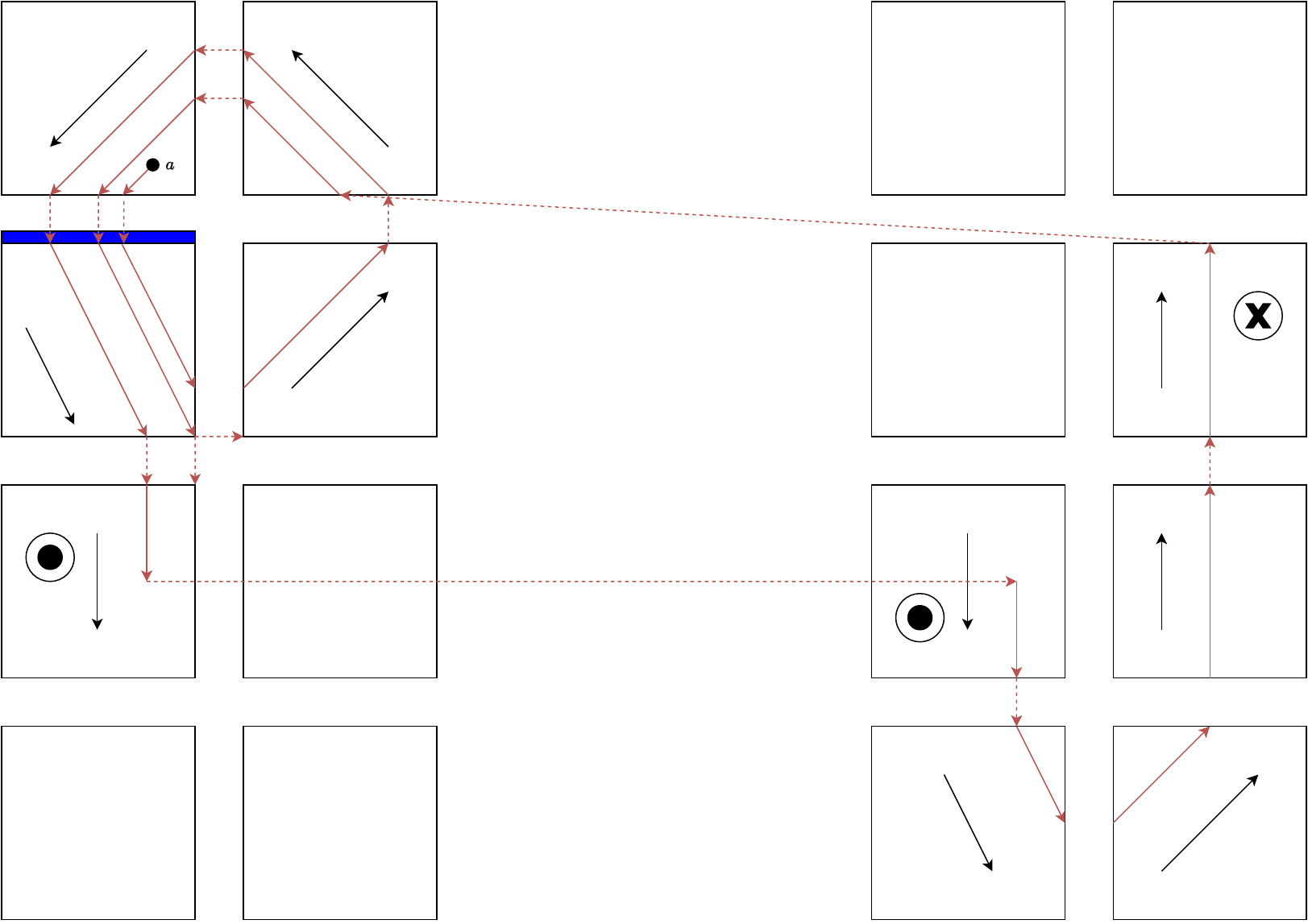}
 \caption{Binary expansion}
  \label{fig:binary}
\end{figure}

 The network computes the binary expansion of a number in $[0,1]$ in the following way: If a trajectory passes $A=e(2,1,(1,2,1))$ (blue area), interpret its distance to the rightmost part $A\cap B , B=e(1,1,(1,2,1))$ as its current value $x$. If $x<\frac12$, the trajectory will enter the small circle of discrete states $(1,2,1),(2,2,1),(2,1,1)$ and $(1,1,1)$ and finally end up in $A$ again with value $2x$. If $x>\frac12$, it will pass the big circle of  discrete states $(1,2,1)$ (where $\frac12$ is subtracted from the value)$,(1,3,1),(1,3,2),(1,4,2)$ (where its value is multiplied with two) $,(2,4,2),(2,3,2),(2,2,2),(2,1,1)$ and $(1,1,1)$ and again reach $A$, this time with value $2(x-\frac12)$. If one follows the trajectory and interprets the first case as 0 and the second case as 1, one obtains the binary expansion of the initial $x$.

 If for example we start in $a=((\frac 3{16},\frac 3{16}),(1,1,1))$, we obtain the trajectory drawn in red. It passes $A$ at value $\frac 3{8}$, so it runs through the small circle and enters $A$ again at value $\frac34$, therefore entering the big circle ending up in $A$ again at value $\frac12$. So far, we obtained the binary expansion 0.01, but now the trajectory becomes indeterministic because the celerities allow a bifurcation at $((1,1),(1,2,1))$. This corresponds to the fact that $\frac 3{8}$ has the two binary expansions $0.011$ and $0.010\bar1$.

 Observe that we obtain a periodic trajectory for every rational value $x$ and a chaotic trajectory for $x$ irrational.
\end{example}

\section{Turing completeness and chaos}\label{Section:structure}

Turing-completeness was already shown for PCD systems in \cite{PCD}, a model related to HGRNs. We would now like to prove it for HGRNs, for they are more restrictive in the sense that they only allow a change of derivatives at integer coordinate changes and admit indeterminism with positive probability. 

\begin{theorem}\label{TC}
HGRNs are Turing powerful.
\end{theorem}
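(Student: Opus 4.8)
The natural approach is to simulate a known Turing-complete model by an HGRN. Given the binary-expansion example already worked out in the excerpt, the cleanest route is to simulate a \emph{Minsky machine} (a two-counter machine), since counter machines are Turing-complete and their integer counters map naturally onto positions inside a hyperrectangle. Alternatively, one can piggy-back on the Turing-completeness of PCD systems established in \cite{PCD} and show that any PCD computation can be emulated by an HGRN; I will sketch the direct counter-machine simulation, as it makes the role of HGRN-specific features (celerities, sliding, crossing) explicit.

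\medskip

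The plan is as follows. First I would encode a configuration of the counter machine as a real number in the fractional part of a hybrid state, using an interleaved base-$b$ encoding of the two counter values together with the current control state. The example in the excerpt shows that an HGRN can realize the maps $x \mapsto 2x$ and $x \mapsto 2(x-\tfrac12)$ by routing a trajectory through small and large cycles of discrete states; I would generalize this to the affine maps needed to simulate \textbf{increment}, \textbf{decrement}, and \textbf{zero-test} on each counter, i.e.\ multiplication and division by the base and comparison against a threshold. Second, I would build one ``gadget'' (a cluster of discrete states with appropriately chosen celerities) per instruction of the machine, wiring the output boundary of each gadget to the input boundary of the gadget for the successor instruction. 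The zero-test is implemented by a conditional branch: the trajectory crosses one of two output boundaries depending on whether its encoded value lies above or below a threshold, exactly as the binary-expansion network branches at $x<\tfrac12$ versus $x>\tfrac12$. Third, I would designate a discrete state (with a neutral or attractive boundary acting as an absorbing region) to represent the halting state, so that the machine halts iff the simulating trajectory reaches that region, giving a reduction from the halting problem to HGRN reachability.

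\medskip

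The \textbf{main obstacle} I anticipate is keeping the arithmetic exact and deterministic. The trajectories are piecewise linear with celerities that change only at integer coordinate crossings, so I must choose celerities (their magnitudes and ratios) so that the affine transformations on the encoded value are computed \emph{exactly} rather than approximately, despite the geometry forcing direction changes only at unit-grid boundaries. In particular, the zero-test threshold must align precisely with a boundary so that the branch is clean, and I must ensure the encoding never lands on an output boundary where the celerities of two neighbouring states permit a bifurcation — otherwise the simulation becomes indeterministic and loses faithfulness. Managing this requires a careful choice of base $b$ (large enough that increments/decrements never collide with the threshold) and a verification that at every transition exactly one boundary is an output boundary. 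Once the encoding and celerities are fixed, checking that each gadget implements its instruction is a routine (if tedious) geometric computation, and composing the gadgets yields a faithful simulation, establishing Turing-completeness.
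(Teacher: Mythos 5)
Your overall strategy --- reduce from a Turing-complete machine by storing unbounded discrete data in the fractional part, implementing the machine's operations by celerity gadgets, and reducing halting to reachability --- is the same one the paper uses (the paper simulates a Turing machine directly, keeping the two tape halves as binary fractions in two separate pairs of genes, the symbol under the head in one gene, the control state in one gene per machine state, and the phase of the update step in two further genes). The genuine gap is in your encoding. You propose to pack both counter values and the control state into a \emph{single} real number via an interleaved base-$b$ expansion, and then to realize increment, decrement and zero-test as ``multiplication and division by the base and comparison against a threshold.'' These two choices are incompatible: multiplying or dividing the interleaved number by $b$ shifts \emph{all} digits and therefore changes both counters (and the control-state digits) simultaneously, while the operation you actually need --- shifting only the digits belonging to one counter --- is not an affine map of the interleaved value at all; it is $x \mapsto x - b^{-(2c_1+1)}\left(1-b^{-2}\right)$, which depends on the hidden exponent $c_1$ and not just on $x$. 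Realizing it would require an unbounded digit-peeling loop with auxiliary storage, i.e.\ exactly the kind of machinery your plan does not provide. The same objection applies to updating the control state inside the real number; moreover that part is redundant, since your plan already represents the control state positionally by having one gadget per instruction.

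The fix is standard and is precisely the design principle of the paper's proof: keep logically separate pieces of data in separate dimensions. Give each counter its own gene and encode $c_i$ as $b^{-c_i}$ in that gene's fractional part; then increment and decrement really are division and multiplication by $b$, zero-test really is a threshold comparison, and all of these are realizable by unit-square gadgets of the kind in the binary-expansion example (transfer between perpendicular boundaries with celerity ratio $b$). With that amendment your plan goes through, and the obstacles you flag --- exactness of the affine maps under the unit-grid restriction, and guaranteeing that exactly one output boundary is hit at each crossing so the simulation stays deterministic --- are the right ones: the paper spends most of its proof on exactly these points, for instance inserting a dedicated calculation phase whose celerities resolve the potential bifurcation that occurs when the encoded right-tape value is exactly $\tfrac12$.
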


\begin{proof}
Let $M=(Q,\Sigma,\Gamma,\delta,q_1,\square,F)$ be a deterministic Turing-Machine starting on an empty tape, $Q=\{q_1,...,q_n\} , F=\{q_n\}$. Assume that all symbols are encoded in binary, so $\Sigma=\Gamma=\{0,1\}$ and that the encoding of $\square$ consists only of zeroes. Also assume that for each two states $q_i$ and $q_j$ there are transitions in only one direction, either $q_i\rightarrow q_j$ or $q_j\rightarrow q_i$ but not both, a property that can be guaranteed by introducing auxiliary states. We construct an equivalent HGRN $H$ with the following genes:

\begin{description}
\item[$\bullet$] One with two discrete levels storing the current value called $G_c$ 
\item[$\bullet$] $G_{\ell,1}$ with three discrete levels and $G_{\ell,2}$ with two discrete levels for the tape values left of the head, where the current value is stored as the binary expansion with the most significant bits corresponding to the values closest to the head in the fractional part of the upmost discrete state of $G_{\ell,1}$.
\item[$\bullet$] $G_{r,1}$ and $G_{r,2}$ analogous for the tape values right of the head.
\item[$\bullet$] One called $G_i$ for every $q_i\in Q$ with two discrete levels each.
\item[$\bullet$] Two called $G_\alpha$ and $G_\beta$ with two discrete levels each, keeping track of the current calculation step as in Fig. \ref{fig:Tcycle}.
\end{description}
The whole network is therefore of shape $[0,2]\times [[0,3]\times[0,2]]^2\times[0,2]^n\times[0,2]^2$ and our trajectory starts in $((0,0),[(2,0),(0,0)]^2,(1,1),[(0,0)]^{n-1},(0,1),(0,0))$.

Let $(q_i,a)\rightarrow(q_j,b,d)$ be a transition in $\delta$, without loss of generality $d=r$ stating that the head moves to the right ($d=\ell$ is analogous and $d=n$ is easily derived). We choose the celerities as follows:

\begin{figure}[h]
\centering
\includegraphics[width=0.2\textwidth]{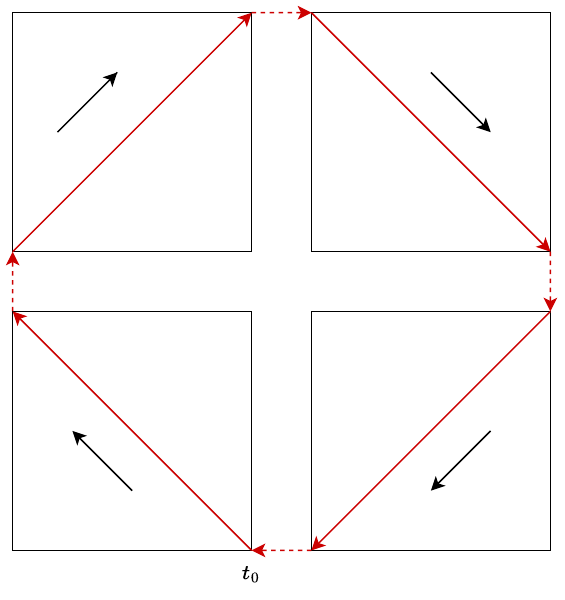}
 \caption{Calculation step accounting}
  \label{fig:Tcycle}
\end{figure}

In the first calculation step $\alpha=\beta=0$ we put $q_j$ into the upmost position. Note that we do not loose the information on which state we started in because we assumed that there is no transition from $q_j$ to $q_i$.

The celerities describing how to update the left side tape start with making the bits less significant by one position because the head recedes by one position, which corresponds to dividing the numerical value of $G_{\ell,1}$ by two, see Fig \ref{fig:Turgadle}, left. This is done in the first calculation step $\alpha=\beta=0$ along with adding half of $b$ to it, see Fig. \ref{fig:Turgadle}, middle, because $b$ is on the position left of the head after that transition.

In the second calculation step $\alpha=0,\beta=1$, this value is then copied back from $G_{\ell,2}$ to the original position, see Fig. \ref{fig:Turgadle}, right, $G_{\ell,1}$, both $G_{\ell,1}$ and $G_{\ell,2}$ remain inactive for the last two steps.

\begin{figure}[h]
\centering
\includegraphics[width=0.5\textwidth]{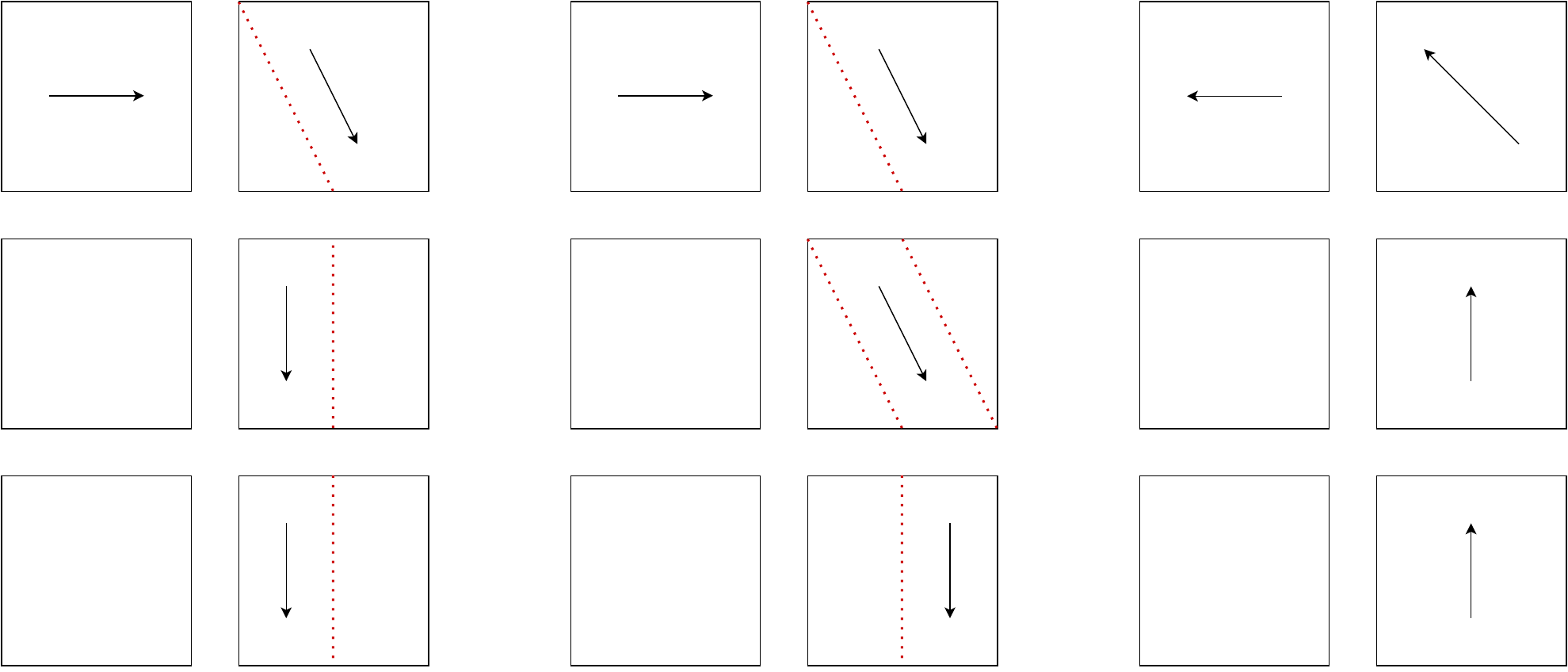}
 \caption{Left side tape update}
  \label{fig:Turgadle}
\end{figure}

The celerities describing how to update the right side tape work similar. We start by multiplying the value by two, see Fig. \ref{fig:Turgadri} left, because each bit now becomes more significant by one position. This is done in the first calculation step $\alpha=\beta=0$, where we also catch the trajectory on the border between $G_{r,1}=1$ and $G_{r,1}=2$ if it is in $G_{r,2}=1$ and let it pass until the lower boundary of $G_{r,1}=1$ if $G_{r,1}=2$, in order to keep track of what our new current value $G_c$ will be in the next Turing step.

\begin{figure}[h]
\centering
\includegraphics[width=0.5\textwidth]{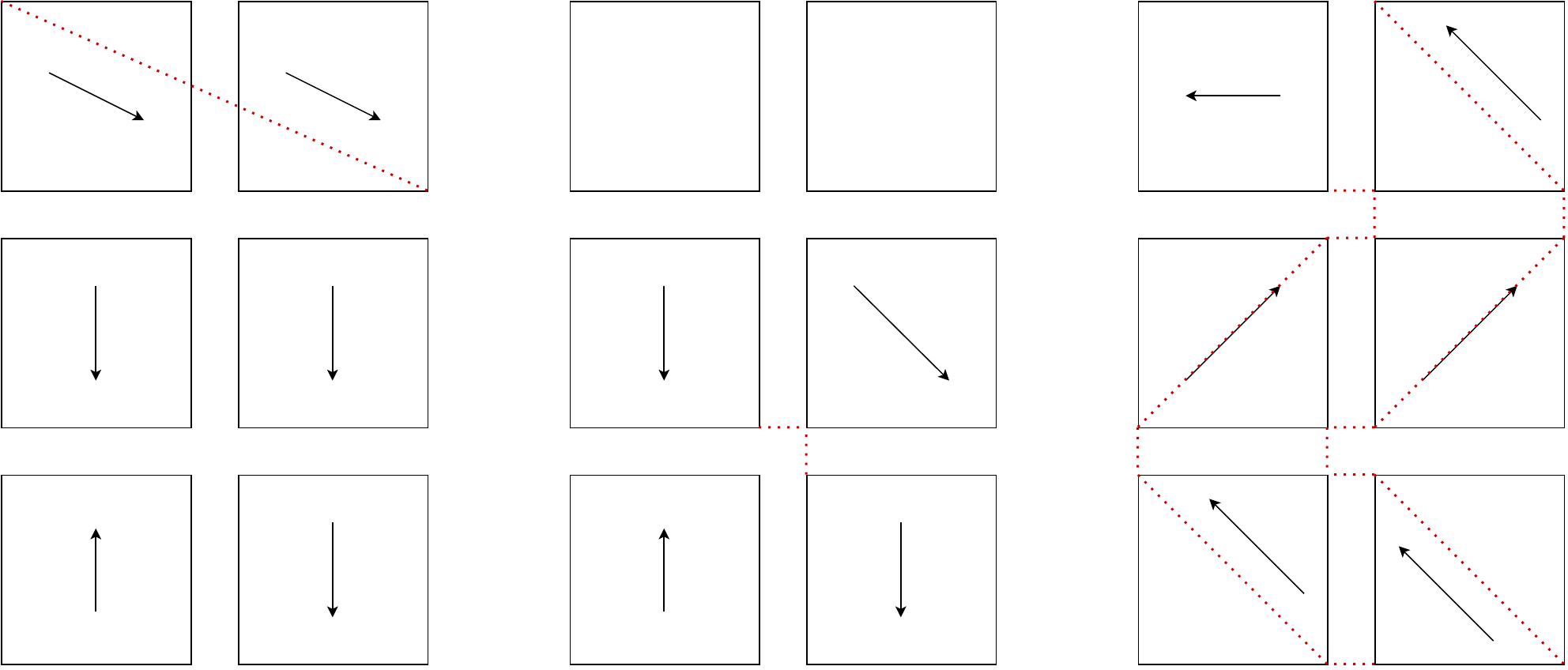}
 \caption{Right side tape update}
  \label{fig:Turgadri}
\end{figure}

Note that if the initial value was $\frac12$, we have an indeterminism when hitting the lower boundary of $G_{r,1}=3$ and the upper boundary of $G_{r,2}=1$. In the second calculation step $\alpha=0,\beta=1$, we therefore adjust the celerities to make sure that we end in $G_{r,2}=2$ whenever the initial value was greater than or equal $\frac 12$, see Fig. \ref{fig:Turgadri} middle (the case that all positions on the right side tape are 1's is not possible).

In the third and fourth calculation step $\alpha=1,\beta=1$ and $\alpha=1,\beta=0$, we pop the most significant bit of the right side tape to be the new $G_c$, which means that if the value of $G_{r,1}$ is greater than 1 we subtract 1 from $G_{r,1}$ and add move $G_c$ to the upmost position, otherwise to the lowest, see Fig. \ref{fig:Turgadri} right. 


Note that we loose the information on whether we moved left or right, so we actually have the celerities of the fourth step on the right side of the tape for the left side of the tape as well. However, the left side is not effected by that because it stays in  $G_{\ell,2}=3$ from the end of step two on and therefore never enters a hypercube with non-zero celerities induced be the update of $G_c$. 

In the fourth step we also bring $G_i$ to the lowest possible point. By assumption of anti-symmetry, $i$ can still be distinguished from $j$ at that point without the knowledge of $d$.$\hfill\blacksquare$
\end{proof}

\begin{corollary}
The reachability problem for HGRNs is undecidable.
\end{corollary}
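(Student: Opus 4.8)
The plan is to derive undecidability of the reachability problem as a direct consequence of Theorem \ref{TC}. Since HGRNs are Turing powerful, there is a uniform simulation procedure that, given any deterministic Turing machine $M$ (without loss of generality one starting on an empty tape), produces a HGRN $H_M$ whose trajectory faithfully encodes the run of $M$. The key observation is that the halting behavior of $M$ must be reflected in a reachability property of $H_M$, so that deciding reachability for $H_M$ would decide whether $M$ halts.

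First I would fix the correspondence between the machine's computation and the geometry of the constructed network. In the simulation of Theorem \ref{TC}, each Turing step is tracked by the calculation-step genes $G_\alpha,G_\beta$, and the current control state $q_i$ is recorded by the dedicated gene $G_i$ being in its upmost discrete level. In particular, the accepting state $q_n=F$ corresponds to the gene $G_n$ reaching its top level. I would therefore set the source set $S$ to be the single initial hybrid state specified in the proof of Theorem \ref{TC}, namely the state encoding the empty tape in control state $q_1$, and set the target set $F$ to be the set of hybrid states (defined by a linear inequality, as required by the reachability problem) in which $G_n$ sits at its maximal level. By construction of $H_M$, the unique trajectory from $S$ passes through the states realizing the successive configurations of $M$; hence $F$ is reachable from $S$ if and only if $M$ eventually enters its halting state, i.e. $M$ halts on the empty input.

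The main step is then a reduction from the halting problem: the map $M\mapsto(H_M,S,F)$ is computable because the celerities assigned in Theorem \ref{TC} are given by an explicit finite rule depending only on the transition table $\delta$, and $S$ and $F$ are read off directly from the gene indexing. If reachability for HGRNs were decidable, composing the decision procedure with this reduction would yield a decision procedure for the halting problem on the empty tape, which is impossible. Therefore the reachability problem for HGRNs is undecidable.

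The part requiring the most care is the faithfulness of the encoding in the deterministic case, so that ``$F$ reachable'' genuinely coincides with ``$M$ halts'' rather than merely being implied by it. Because $M$ is deterministic and the construction routes the single trajectory deterministically through the accounting cycle of Fig.~\ref{fig:Tcycle}, the only potential source of branching is the indeterminism noted in Theorem \ref{TC} when the encoded value equals $\tfrac12$; I would observe that this arises from the binary-expansion ambiguity of the tape contents and does not affect whether the control gene $G_n$ is ever activated, so both branches agree on the reachability of $F$. This ensures the biconditional needed for the reduction, completing the argument.
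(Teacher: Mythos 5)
Your proposal is correct and takes essentially the same approach as the paper: the paper's proof is the one-line observation that the halting problem can be reformulated as reachability of the state $q_n$ in the construction of Theorem \ref{TC}, which is exactly your reduction. Your additional care about computability of the map $M\mapsto(H_M,S,F)$ and about the binary-expansion indeterminism at $\tfrac12$ fills in details the paper leaves implicit, but the underlying argument is the same.
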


\begin{proof}
The halting problem can be reformulated as the reachability problem of state $q_n$.$\hfill\blacksquare$
\end{proof}

Note that the trajectories that are obtained by the Turing machines that behave chaotic automatically lead to chaotic trajectories, for the sequence of reached states cannot have a repeating pattern.  

\section{Indeterminism}\label{Section:Indeterminism}
Most attempts to solve verification tasks such as the reachability problem become a lot more difficult as soon as trajectories are not regular any more. It is therefore desirable to estimate how likely a trajectory is to be chaotic or indeterministic.
\begin{example}\label{probchaos}
In opposition to what is claimed in \cite{honglu}, indeterminism will happen with non-zero probability in networks with at least 3 genes having at least two discrete levels each when all celerities and the starting point are chosen uniformly random in $(0,1)$. Fig. \ref{fig:Chaos} is a sketch of a HGRN with 3 Genes $G_1,G_2$ and $G_3$ with two levels 0 and 1 each that allows the indeterministic red trajectory starting in point $a$ in $(0,0,0)$.

\begin{figure}[h]
\centering
\includegraphics[width=0.5\textwidth]{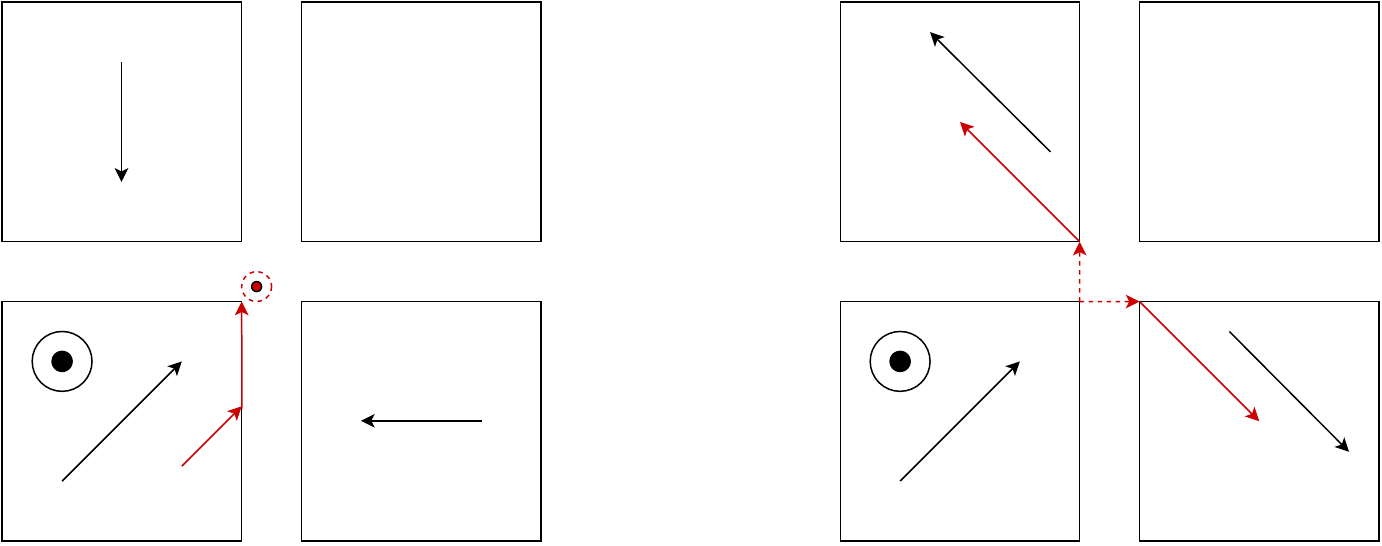}
 \caption{Indeterminism with positive probability}
  \label{fig:Chaos}
\end{figure}

The trajectory reaches the attractive boundary to discrete state 1 of $G_1$, slides along until it reaches the attractive boundary to discrete state 1 of $G_2$ and slides "upwards" along this intersection of discrete domains until it reaches the hybrid state $((1,1,1),(0,0,0))$. The border to the discrete state 1 of $G_3$ is an output boundary, so it is crossed and the hybrid state $((1,1,0),(0,0,1))$ is immediately reached. Now the boundaries to the discrete levels 1 of $G_1$ and $G_2$ both are output boundaries, so the trajectory may pass any of them, leading to indeterminism. Note that we did not rely on probability-zero-events; The probability of the celerity in the first hypercube to point into the middle in all three dimensions is $(\frac12)^3$, the first two reached boundaries are then attractive with probability $\frac 12$ each as well as the third one being and output boundary and the two new boundaries after the crossing are output boundaries with probability $\frac14$ each. The accumulated probability is then $\frac18\cdot(\frac12)^2\cdot\frac12\cdot\frac14^2=(\frac12)^{10}$. This probability significantly rises once we increase the size of the network, meaning the number of genes as well as their dimensions.
\end{example}

We can be more general:

\begin{theorem}\label{PC}
Let all celerities of a HGRN with $n$ genes and at least two discrete levels per gene as well as the starting point of a trajectory be randomly chosen by the uniform distribution in $[0,1]^n$.
Then the probability for this trajectory to lead to indeterminism tends to 1 as $n$ tends to $\infty$. 
\end{theorem}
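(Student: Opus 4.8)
The plan is to turn the single constant-probability witness of Example \ref{probchaos} into a growing number of (essentially) independent chances for indeterminism, and then apply a Borel--Cantelli / geometric-tail estimate. Recall that the example exhibits a configuration of three genes, each with two levels, on which a trajectory is funneled by two attractive boundaries into a corner, crosses one output boundary, and then faces two output boundaries at once; this happens whenever a bounded collection of celerity components has a prescribed sign pattern, an event of probability $p=(1/2)^{10}>0$ that does not depend on $n$. So it suffices to find $m=m(n)\to\infty$ such witnessing events, each of probability at least some fixed $p_0>0$ and jointly independent (or at least negatively associated), because then
\[
\Pr[\text{no indeterminism}]\ \le\ (1-p_0)^{m(n)}\ \xrightarrow[n\to\infty]{}\ 0 .
\]

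First I would partition the $n$ genes into $k=\lfloor n/3\rfloor$ disjoint triples $T_1,\dots,T_k$ and attach to each triple a copy of the Example \ref{probchaos} gadget, living in the three coordinate directions of that triple. The success of the $t$-th gadget --- that the trajectory, once it is at the relevant corner of the three coordinates of $T_t$, sees two output boundaries --- is determined solely by the signs of the celerity components in the three directions of $T_t$ at a bounded number of discrete states. Since all celerity components are drawn independently and the triples are coordinate-disjoint, these sign patterns are governed by disjoint blocks of the underlying randomness; this is what I would use to obtain independence of the $k$ success events and hence the geometric bound above.

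The essential difficulty --- and the step I expect to be the main obstacle --- is that a single hybrid trajectory moves in all $n$ coordinates simultaneously, so the gadgets are dynamically coupled even though their defining randomness is disjoint: I must guarantee both that the trajectory actually reaches the corner region of each triple and that motion in the coordinates of one triple does not destroy the boundary structure being built up in another. I would address this by the lazy-revelation viewpoint: reveal the celerity of a discrete state only when the trajectory first enters it, so that each freshly entered cube supplies an independent uniform celerity independent of the entire past. Under this viewpoint the trajectory becomes a flow in a random environment, and each time it slides into a fresh corner and crosses one output boundary into a previously unvisited cube it gets a conditionally independent trial in which, with probability at least $p_0$, two of the remaining boundaries are output boundaries.

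It then remains to show that with probability tending to $1$ the trajectory produces $\omega(1)$ such fresh trials before it can be absorbed, i.e.\ before it can reach a pure sink corner at which all boundaries are attractive. Here the high dimension works in our favour: a pure sink requires all $\Theta(n)$ relevant celerity signs to be attractive simultaneously, an event whose probability decays in $n$, whereas at a generic reached corner each of the many incident boundaries is an output boundary with constant probability, so the number of output boundaries concentrates well above $2$. I would make this precise with a concentration estimate on the number of output boundaries at the first high-codimension corner, and I expect the careful bookkeeping needed to guarantee enough fresh, conditionally independent trials (equivalently, to rule out early absorption) to be where the bulk of the technical effort lies.
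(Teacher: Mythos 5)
Your skeleton --- reduce to many triples of genes, give each triple a constant probability $p_0$ of realizing the Example \ref{probchaos} pattern, and finish with $(1-p_0)^{m(n)}\to 0$ --- is exactly the skeleton of the paper's proof, down to the use of triples and the geometric bound. The genuine gap is the step you flag yourself, and your setup makes it harder than it needs to be: because you fix the partition $T_1,\dots,T_k$ of the coordinates in advance and define the $t$-th success event conditionally on the trajectory being at the relevant corner of $T_t$, you need the trajectory to actually visit the gadget region of $\omega(1)$ of the \emph{prescribed} triples. Coordinate-disjointness of the celerity randomness does not give you this: which corners get approached is a function of all $n$ coordinates jointly, and nothing prevents the trajectory from being absorbed (for instance at a corner all of whose incident boundaries are attractive) having visited none of them. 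Your fallback --- lazy revelation plus a concentration estimate producing $\omega(1)$ fresh, conditionally independent trials before absorption --- is stated only as a plan; as written, the proposal contains no argument that the number of such trials tends to infinity in probability, and that is precisely the quantitative heart of the theorem.

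The paper closes this step by \emph{not} prescribing the triples in advance. It argues gene by gene that some inner boundary of that gene is reached at some point with probability at least $\frac{1}{2}$, treats these $n$ events as independent, and concludes via the central limit theorem that at least $\frac{n}{4}$ genes have a reached inner boundary with probability tending to $1$. Only then does it group these \emph{reached} boundaries --- whichever genes they happen to belong to --- into $\frac{n}{12}$ triples and play the Example \ref{probchaos} pattern on each (two attractive boundaries, then an output boundary, then two output boundaries after the crossing), for a per-triple success probability of at least $(\frac{1}{2})^7$ and the bound $\left(\frac{2^7-1}{2^7}\right)^{n/12}\to 0$. Grouping a posteriori makes ``the gadget region is visited'' true by construction, which is exactly what your fixed partition cannot guarantee. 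The independence-under-dynamical-coupling worry you raise applies just as much to the paper's one-line claims, and your lazy-revelation viewpoint is the right tool to justify them; but that is a refinement of the paper's argument, not a substitute for the missing visiting/absorption step. If you replace your prescribed partition by the paper's per-gene reachability claim and a posteriori grouping, your proof goes through and the heavy concentration machinery becomes unnecessary.
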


\begin{proof}
First, note that for each gene we have that an inner boundary is reached at some point with probability at least $\frac12$. These events are independent, so it follows by the central limit theorem that the probability of at least one fourth of all genes reaching an inner boundary tends to 1. Each of these $\frac n4$ boundaries is attractive with probability $\frac12$, if we group them into $\frac n{12}$ triples of boundaries, each triple starts with two attractive boundaries followed by an output boundary with probability $\frac18$ as in Example \ref{probchaos}. The probability that after the crossing of such a third boundary, the two other boundaries become output boundaries is $(\frac12)^2$ each, so the accumulated probability for each triple to lead to indeterminism is at least $\frac18\cdot(\frac12)^2\cdot(\frac12)^2=(\frac12)^7$. The probability that none of the $\frac n{12}$ triples leads to indeterminism is therefore at most $(\frac{2^7-1}{2^7})^{\frac n{12}} \xrightarrow{n\rightarrow\infty} 0$.$\hfill\blacksquare$
\end{proof}

Note that Example \ref{probchaos} relied on the third dimension, for hitting two edges at the same time without sliding before is highly unlikely:

\begin{theorem}\label{tdnc}
Random HGRNs in two dimensions admit chaos only with probability 0.
\end{theorem}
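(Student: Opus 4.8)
The plan is to split the argument along the lines of the remark preceding the statement: first show that a two-dimensional HGRN trajectory that never triggers indeterminism is necessarily regular, and then show that for random celerities and a random starting point (as in Theorem \ref{PC}, assuming only that the distributions are absolutely continuous) indeterminism is triggered with probability $0$. Combining the two, with probability $1$ the random trajectory is deterministic and hence regular, so chaos has probability $0$.

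For the first part I would exploit planarity. As long as no indeterminism occurs the HGRN is forward-deterministic: from any hybrid state $(\pi,d)$ the future trajectory is unique. I would first argue that a non-periodic trajectory cannot revisit a hybrid state, since the planar point together with its discrete state determines $(\pi,d)$, and determinism would then force the whole future to repeat, contradicting non-periodicity; the case where a transversal passage meets a later passage that slides along the same boundary needs separate care, but determinism of the sliding mode again forces repetition. Hence a chaotic trajectory would trace a simple (non-self-crossing) curve in the bounded domain $[0,n_1]\times[0,n_2]$. Because the number of discrete states is finite while a chaotic trajectory changes discrete state infinitely often, some output boundary $e$ is crossed transversally infinitely often. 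At this point I would invoke a Poincaré--Bendixson type argument, exactly as in the decidability proof for two-dimensional PCD systems in \cite{PCD}: two successive crossings of $e$ together with the arc joining them bound a Jordan region the trajectory cannot leave, so the successive crossing points on $e$ form a monotone sequence, the first-return map on $e$ is monotone and piecewise affine, and a monotone return map forces the sequence of discrete states to become eventually periodic. This contradicts chaos and establishes that without indeterminism there is no chaos in two dimensions.

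For the second part I would characterise when indeterminism can occur. By the definition of a hybrid trajectory, indeterminism arises only when the current hybrid state lies on several boundaries, at least two of which are output boundaries. In two dimensions lying on several boundaries forces $\pi\in\{0,1\}^2$, i.e. the trajectory must hit a vertex of the unit grid exactly. I would then show this is a null event: conditioning on the combinatorial type of the trajectory (the countable-many possible sequences of visited discrete states together with the crossed or slid boundaries), the arrival point of the $k$-th boundary crossing is a fixed affine-rational function of the random celerities and starting point, and hitting a prescribed corner imposes a single exact algebraic equality; under absolutely continuous distributions such an equality has probability $0$, and a countable union over $k$ and over the finitely many corners reachable at each step remains null.

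I expect the main obstacle to be the first part, namely making the monotone-return-map argument fully rigorous in the presence of sliding modes, where distinct trajectories may merge, where the return map may be discontinuous, and where one must rule out the trajectory accumulating infinite behavior through sliding segments rather than transversal crossings; the measure-zero argument of the second part is comparatively routine once the combinatorial decomposition is in place, the only delicate point being to verify that hitting a corner is a genuine (non-degenerate) equality constraint rather than one forced identically by the combinatorial type.
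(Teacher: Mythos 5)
Your overall decomposition mirrors the paper's (chaos in the plane requires bifurcation; bifurcation is a null event for random two-dimensional HGRNs), and your first part is a legitimate, more detailed rendering of what the paper dismisses in one line: since a planar trajectory cannot cross itself, chaos can only arise through indeterminism, which is exactly the monotone-return-map argument of \cite{PCD} that you invoke. The genuine gap is in your second part, the one you call ``comparatively routine.'' The assertion that hitting a grid vertex is a non-degenerate algebraic equality, hence an event of probability zero, is false once sliding modes are taken into account. An attractive boundary funnels a set of starting points of \emph{positive} measure onto itself, and every such trajectory then slides deterministically into the vertex at the end of that boundary; for these combinatorial types the constraint ``arrival point equals corner'' is satisfied identically. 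This is precisely the degeneracy you flag in your final sentence but do not resolve, and it cannot be verified away in the direction you hope: in two dimensions a random trajectory hits a corner exactly with positive probability, so your countable union of allegedly null events contains events of positive measure.

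What rescues the theorem — and is the central observation of the paper's proof — is a case distinction on \emph{how} the corner is reached. If the corner is reached by sliding, the boundary being slid along is attractive or neutral, hence not an output boundary, so the hybrid state at the corner lies on at most one output boundary and no bifurcation can occur there; the corners that are hit with positive probability are therefore harmless. A corner admitting two output boundaries must be reached transversally, and for it to be reached with positive probability its area of attraction must have positive measure, which in two dimensions forces funneling by sliding: the corner must itself be the endpoint of an attractive boundary (excluded above) or lie on the deterministic continuation of the trajectory emanating from such an endpoint. The latter is a condition on the celerities alone, and it fails almost surely, since there are only finitely many such endpoints, each emitting a single one-dimensional curve. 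With this case analysis inserted, your measure-theoretic bookkeeping goes through; without it, the argument as proposed fails.
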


\begin{proof}
Since a trajectory cannot cross itself, it is obvious that chaos can only be obtained by bifurcation.

A trajectory starts in the inner of a hypercube with probability 1. In order to bifurcate, it needs to reach two output boundaries at the same time, meaning it has to hit one of finitely many points. Reaching such an intersection point $q$ by sliding towards it along an attractive boundary will not lead to indeterminism at $q$ because the boundary which one is sliding along is not an output boundary. A point $p$ in a two-dimensional HGRN that has an area of attraction with non-zero measure must either have integer coordinates connected to an attractive boundary or be reachable from such a point, which is a zero-set.

The area of attraction of a point that admits two output boundaries is therefore one-dimensional with probability 1, meaning it is a zero-set.  $\hfill\blacksquare$

\end{proof}

Note that an indeterministic network such as in Example \ref{probchaos} inherently enables chaotic trajectories if both choices lead back into the area of attraction of the point $a$ at which the path was bifurcating, therefore forming two cycles $T_1,T_2$ with the same starting point $a$.

\begin{figure}[h]
\centering
\includegraphics[width=0.25\textwidth]{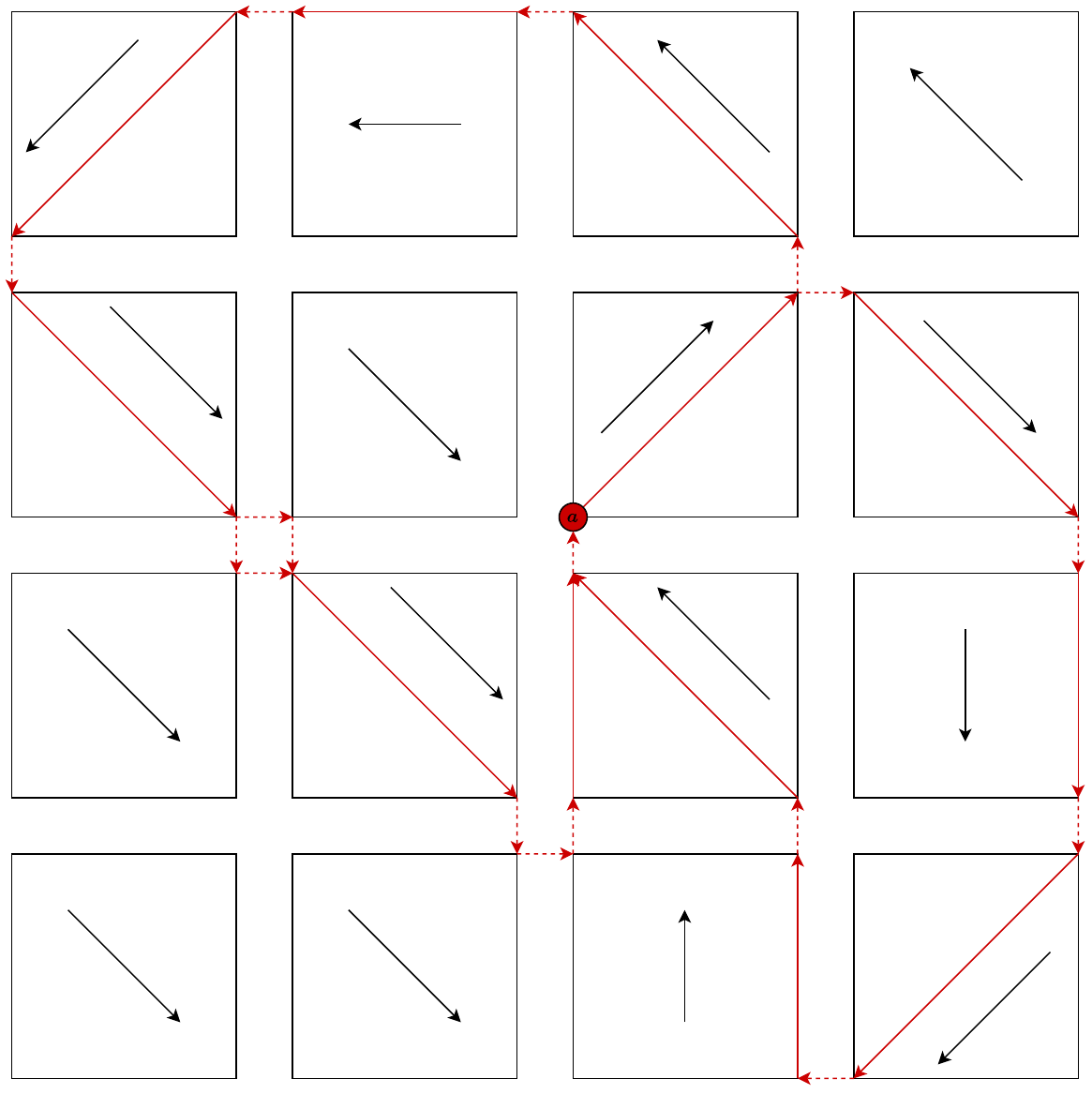}
 \caption{Guaranteed indeterminism with positive probability}
  \label{fig:Guacyc}
\end{figure}

 The idea is sketched in Fig. \ref{fig:Guacyc} for two dimensions.

Take an irrational number, for example $\pi$, and when reaching $a$ for the $n$-th time, follow the cycle $T_i$ iff the $n$-th position in the binary expansion of $\pi$ is $i$. The picture shows indeterministic behavior in two dimensions, which only happens with probability zero, it is however obvious that this construction can be lifted into three dimensions using Example \ref{probchaos} as a gadget to obtain a similar chaotic path with positive probability. 

Fig. \ref{fig:Guacyc} is also an example for a two-dimensional HGRN that leads to indeterminism independent by the starting point of the trajectory, and therefore to chaos, because every possible trajectory in $\mathcal N$ will at some point pass $a$, meaning $\varphi(a)=\Phi(a)=\mathcal N$.

This kind of chaos differs from the chaos that makes HGRNs Turing-complete. It depends on the choices made on the way rather than the restrictions of the network, and another trajectory starting in the same point might be periodic. To examine more precisely how the reachability problem and the identification of chaos relate to those, from now on we call a chaotic path that reaches a certain point an infinite number of times recurrent, and otherwise say that its chaos is proper.

The proof of Theorem \ref{TC} relies on proper chaos, so reachability could still be decidable in recurrent chaos.

The following is a direct consequence of \cite{PCD}:
\begin{lemma}
A two-dimensional HGRN never admits proper chaos.
\end{lemma}

We now want to define a property that most HGRNs of natural origin share:

\begin{definition}
For a HGRN $\mathcal N$ with two discrete levels for every gene and $H=(V,E,\rho)$ a complete weighted directed graph with $V=\{1,...,N\}$ the genes of $\mathcal N$ and $\rho:E\rightarrow\mathbb Q$, we say that $\mathcal N$ is induced by $G$, iff for every $\alpha\in\{0,1\}^N$ we have that celerities are the weighted sums over the incoming edges
\[c(\alpha)_j=\sum\limits_{\substack{i=1 \\ \alpha_i=1}}^N\rho(i,j)-\sum\limits_{\substack{i=1 \\ \alpha_i=0}}^N\rho(i,j)\]
We call $H$ the influence graph of $\mathcal N$.
\end{definition}

We want to pose the question whether our problems become easier when inducedness is given.

Note that induced HGRNs form a zero-set among all possible HGRNs.

\begin{theorem}\label{wei}
It can be decided whether a HGRN is induced or not in P. Moreover, the influence graph is unique and can be computed in polynomial time.
\end{theorem}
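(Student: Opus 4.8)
The plan is to exploit the linear structure hidden in the defining equation, and to observe that, for a network with two levels per gene, the input already lists the celerity on all $2^N$ states, so ``polynomial time'' means polynomial in a quantity of size $\Theta(2^N N)$. Rewriting inducedness, for each output coordinate $j$ and each state $\alpha\in\{0,1\}^N$ the condition reads
$$c(\alpha)_j=\sum_{i=1}^N(2\alpha_i-1)\,\rho(i,j),$$
because a gene $i$ with $\alpha_i=1$ contributes $+\rho(i,j)$ and one with $\alpha_i=0$ contributes $-\rho(i,j)$. In particular every map $\alpha\mapsto c(\alpha)_j$ must be affine, and the weights $\rho(i,j)$ are essentially its linear coefficients.

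I would first settle uniqueness and extract the candidate weights in one stroke. Fix $i,j$ and pick any two states $\alpha,\alpha'$ that agree in all coordinates except the $i$-th, where $\alpha_i=0$ and $\alpha'_i=1$. Subtracting the two corresponding equations, every term with index $k\neq i$ cancels and we are left with $c(\alpha')_j-c(\alpha)_j=2\rho(i,j)$, so that any inducing weight function must satisfy
$$\rho(i,j)=\tfrac12\bigl(c(\alpha')_j-c(\alpha)_j\bigr).$$
This proves at once that the influence graph is unique whenever it exists, since each weight is forced; concretely one reads off $\rho(i,j)$ from the pair consisting of $\mathbf 0$ and the state carrying a single $1$ in position $i$.

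The decision procedure is then to compute the $N^2$ candidate weights by the formula above (at $O(N^2)$ rational operations on numbers of input-bounded size), and to verify the candidate by substituting it into the expanded equation for every state $\alpha$ and every coordinate $j$, at a further cost of $O(2^N N^2)$ operations; one declares $\mathcal N$ induced, with influence graph $\rho$, exactly when all checks pass. Correctness follows from the previous paragraph: if $\mathcal N$ is induced by some $\rho^\ast$, then $\rho^\ast$ must equal the candidate and hence reproduces every celerity, while conversely a candidate that passes verification is by definition an inducing weight function. The total running time is polynomial in the input size.

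The one point I would treat with care — the main subtlety here — is that inducedness is strictly stronger than affineness of the maps $c(\cdot)_j$: an affine function on $\{0,1\}^N$ has $N+1$ free parameters, whereas the induced form has only $N$, the missing constraint being that the constant term equals $-\sum_i\rho(i,j)$. This is precisely why the algorithm checks the full equation at all states (equivalently, checks that all first differences are constant and that the single extra relation $c(\mathbf 0)_j=-\sum_i\rho(i,j)$ holds) rather than stopping after reading off the first differences. Everything beyond this is routine rational arithmetic.
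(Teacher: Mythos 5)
Your proposal is correct and follows essentially the same route as the paper: the uniqueness argument is literally the paper's formula $\rho(i,j)=\frac{1}{2}\bigl(c(e_i)_j-c(\mathbf{0})_j\bigr)$ obtained by differencing two states that differ only in coordinate $i$, and your decision procedure is just an explicit instantiation of the paper's observation that inducedness is a rational linear system solvable in polynomial time (you extract the forced candidate and verify all $2^N N$ equations, rather than invoking generic Gaussian elimination). Your added remarks --- that ``polynomial'' is measured in the $\Theta(2^N N)$ input listing all celerities, and that affineness alone is weaker than inducedness because the constant term is also constrained --- are correct sharpenings the paper leaves implicit, but they do not change the approach.
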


\begin{proof}
When given all celerity vectors, the decisional as well as the computational problem are linear equation systems on rational domains and therefore solved efficiently. It remains to show that there can only be one solution.

Suppose there were two graphs $H_1$ and $H_2$ inducing the same HGRN, then for all dimensions $i,j\in\{1,...,n\}$ we have that 

\[\rho_1(i,j)=\frac{c([0]^{i-1},1,[0]^{n-i})_j-c(0,...,0)_j}2=\rho_2(i,j)\]
which gives us uniqueness.$\hfill\blacksquare$
\end{proof}

The constellation in Example \ref{probchaos} can be obtained by an induced HGRN with positive probability, for example by

\begin{center}
\includegraphics[width=0.5\textwidth]{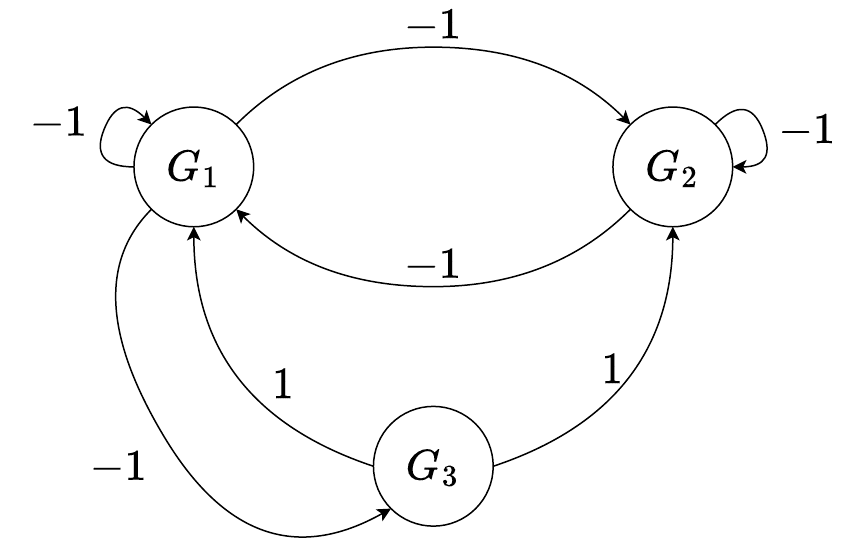}
\end{center}

\section{Conclusion and Further Questions}
We examined the reachability problem for HGRNs and showed in what way the size and dimension of the net influence chaos and indeterminism.

Further open questions are:

\begin{description}
\item[1.)] Are induced HGRNs still Turing-powerful?
\item[2.)] Does Theorem \ref{PC} still hold for induced HGRNs?
\item[3.)] For $H_\mathcal N$ the set of all possible trajectories in $\mathcal N$, how do the sets $R_t:=\{h(t)\mid h\in H_\mathcal N\}$ behave for large $t$? With what probability and how fast do they approach a zero-set?
\item[4.)] Can these sets $R_t$ be computed efficiently?
\item[5.)] Can HGRNs be expressed by induced HGRNs with the help of auxiliary dimensions?
\end{description}
\textbf{Acknowledgment}: We want to thank Klaus Meer for helpful discussion.

\bibliography{literatur}

\bibliographystyle{plain}

\end{document}